\newtheorem{problem}{Problem}
\newtheorem{lemma}{Lemma}[section]
\newtheorem{proposition}{Proposition}[section]
\begin{document}
\title{Sparse solution of overdetermined linear systems when the columns of $A$ are orthogonal}
\author{Phanindra V. Jampana \\
  Department of Chemical Engineering \\
  Indian Institute of Technology, Hyderabad 502205, India \\
  Email: pjampana@iith.ac.in \\\\
  Challa S. Sastry \\
  Department of Mathematics\\
  Indian Institute of Technology, Hyderabad 502205, India \\
  Email: csastry@iith.ac.in}
\maketitle
\begin{abstract} In this paper, we consider the problem of obtaining the best $k$-sparse solution of
  $Ax=y$ subject to the constraint that the columns of $A$ are orthogonal. The naive approach for 
  obtaining a solution to this problem has exponential complexity and there exist $l_1$ regularization
  methods such as Lasso to obtain approximate  solutions. In this paper, we show that we can obtain an exact
  solution to the problem, with much less computational effort compared to the brute force search when the columns of $A$ are orthogonal.
\end{abstract}
\section{Introduction}

\par We consider the following problem:
\begin{problem}[$P_1$]
  Find $u^k \in \mathbb{R}^n$ such that
  \[\|Au^k - y\| = \inf\{\| Ax - y\|~|~x \in \mathbb{R}^n, \| 
  x \|_0 = k \}.\]
\end{problem}

\noindent Here, the dimensions of $u^k,A,y,x$ are $n\times 1, m \times n, m \times 1, n \times 1$ respectively, with $m > n$. Obtaining sparse solutions to overdetermined system of equations has a long history in the statistics community. For example, the Lasso algorithm due to Tibshirani 
(\cite{tibshirani96regression}) tries to solve the following problem (for a fixed $\lambda$):-
\[\mathrm{~find~} x \mathrm{~such~that ~} \|Ax-y\|_2^2 + \lambda \|x\|_1 \mathrm{~is~minimized~} \] 

$l_1$ norm is employed here rather than the $l_0$ as using $l_1$ penalization makes the problem tractable
(convex optimization methods can be used to solve it) and also due to the fact that minimizing $l_1$ 
norm typically provides sparse solutions. The minimizer to the Lasso is obtained by solving a series of Quadratic Progrmmaing problems (\cite{tibshirani96regression}). Other computational techniques also exist in the literature for solving the Lasso, e.g. see \cite{osborne,lassodual}.

The Lasso method is general and is applicable to any matrix. However, the parameter $\lambda$ in the unconstrained formulation of the problem has to be tuned to obtain satisfactory results. $\lambda$ is usually
obtained by cross-validation.

Sparse solutions to overdetermined system are also considered in the paper by
\cite{Candes_Rudelson_Tao_Vershynin_2005}. The authors study the problem of reconstructing $x$
exactly when the observed data are corrupted by noise. If 
\[ y = Ax + e\]
the authors give conditions on the matrix $A$ and an minimizing algorithm which recovers $x$ \textit{exactly}
subject to a constraint on the number of non-zero entries of $e$. Howerver, this work is not 
directly relevant to the problem under consideration.

In this paper, we give an explicit solution to $P_1$, under the constraint that the columns of $A$ are orthogonal. We show that the solution given is equivalent (i.e. has equal error) to any solution obtained by a brute force search. One advantage of the method over the Lasso is that no tuning is necessary. As the
proposed method still involves computing inverse of $A^TA$, it might not scale well to problems where is
$n$ is very large.

\section{Equivalence of solutions}
We first fix some notation. If $x$ is a vector let $x_{\circ}^2$ denote the element wise square of $x$.
Let, $z = (A^TA)^{-1}(A^Ty)_{\circ}^2$ and $z_s$ be the result of sorting $z$ (in a stable manner) 
in the decreasing order. 
Let $f$ be a permutation such that $f(i) = j$ implies that $z_s(i) = z(j)$, where $x(i)$ denotes the
$i^\mathrm{th}$ element of the vector $x$. Define, $x_{\mathrm{pi}} = A^\dagger y$ and finally,
\begin{eqnarray*}
v^k(f^{-1}(i)) &=& x_{\mathrm{pi}} (f^{-1}(i)) \mathrm{~if~} i \leq k\\
v^k(f^{-1}(i)) &=& 0, \mathrm{~if~} i > k
\end{eqnarray*}
Note here that $v^k$ can be computed with much less computational effort than the brute force search as 
the inverse $(A^TA)^{-1}$ is only computed once. 

\begin{proposition}
  \label{prop:prop1} 
  Let $A$ be an $m \times n$ matrix such that $m > n$, $n > 1$. Assume that $A$ has full column rank. If 
  the columns of $A$ are orthogonal, then
  \[\|Au^k-y\| = \|Av^k-y\|~\forall y \in \mathbb{R}^m, ~\forall 
  k \in \{1,...,n\}\]
\end{proposition}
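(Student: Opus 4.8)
The plan is to use the orthogonality of the columns of $A$ to collapse $P_1$ into a top-$k$ selection problem. Write $A=[a_1\mid\cdots\mid a_n]$; orthogonality means $a_i^Ta_j=0$ for $i\ne j$, so $A^TA=\mathrm{diag}(\|a_1\|^2,\ldots,\|a_n\|^2)$, which is invertible by the full-column-rank assumption. Hence $x_{\mathrm{pi}}=A^\dagger y=(A^TA)^{-1}A^Ty$ has entries $x_{\mathrm{pi}}(i)=a_i^Ty/\|a_i\|^2$, and $z=(A^TA)^{-1}(A^Ty)_{\circ}^2$ has entries $z(i)=(a_i^Ty)^2/\|a_i\|^2\ge 0$.

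Next I would analyze a fixed support. For $S\subseteq\{1,\ldots,n\}$, the quantity $\inf\{\|Ax-y\|:\mathrm{supp}(x)\subseteq S\}$ is the distance from $y$ to the span of the columns $\{a_i\}_{i\in S}$, achieved by the orthogonal projection $P_Sy$. Because the $a_i$ are pairwise orthogonal, $P_Sy=\sum_{i\in S}(a_i^Ty/\|a_i\|^2)a_i$, so the minimizing coefficient on column $i$ equals $x_{\mathrm{pi}}(i)$ \emph{regardless of which other columns belong to} $S$ — this support-independence is the heart of the argument and is precisely what breaks for non-orthogonal $A$. Expanding the squared norm gives $\|y-P_Sy\|^2=\|y\|^2-\sum_{i\in S}(a_i^Ty)^2/\|a_i\|^2=\|y\|^2-\sum_{i\in S}z(i)$.

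The remainder is bookkeeping. Minimizing $\|Ax-y\|^2$ over $x$ with $\|x\|_0=k$ reduces, by the previous step, to choosing $S$ with $|S|=k$ so as to maximize $\sum_{i\in S}z(i)$; since every $z(i)\ge 0$, the maximum is $\sum_{i=1}^{k}z_s(i)$, attained at $S^\star=\{f(1),\ldots,f(k)\}$ (any tie-break, in particular the stable sort defining $f$, yields the same value, and nonnegativity of $z$ also shows no support of size $<k$ can do better). Thus $\|Au^k-y\|^2=\|y\|^2-\sum_{i=1}^{k}z_s(i)$. On the other hand, by its definition $v^k$ is supported on $S^\star$ with $v^k(i)=x_{\mathrm{pi}}(i)$ there and $0$ elsewhere, so $Av^k=P_{S^\star}y$ and $\|Av^k-y\|^2=\|y\|^2-\sum_{i=1}^{k}z_s(i)$ as well; equating the two expressions gives the proposition. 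The only delicate points are the support-independence of the per-column least-squares coefficients (which is exactly where column orthogonality enters) and the routine handling of ties and of the ``$\|x\|_0=k$'' constraint via the nonnegativity of $z$.
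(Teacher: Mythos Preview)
Your argument is correct and is essentially the paper's own proof: both reduce the problem, via the diagonality of $A^TA$, to the identity $\|y-P_Sy\|^2=\|y\|^2-\sum_{i\in S}(a_i^Ty)^2/\|a_i\|^2=\|y\|^2-\sum_{i\in S}z(i)$ and then observe that the optimal $k$-support is the top-$k$ indices of $z$, which is exactly how $v^k$ is built. The only difference is packaging---the paper carries out the same computation through column- and row-picking matrices $C,R$ and pseudoinverses (arriving at $(A^Ty)^TC(C^TA^TAC)^{-1}C^T(A^Ty)=\sum_{i\in S}(A^Ty)_i^2/\lambda_i$), whereas you use the orthogonal-projection formula directly; the substance is identical.
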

\begin{proof}
  As $u^k$ is a solution to P1, its non-zero elements should be of the form $u = (AC)^\dagger y$ 
  where $C$ is a $n\times k$ column picking matrix. Similarly, the non-zero elements of 
  $v^k$ can be written as $v = RA^\dagger y$ where $R$ is a $k \times n$ row picking matrix. 
  We then have $\|Au^k - y\| = \|ACu - y\|$ and 
  $\|Av^k-y\| = \|AR^Tv-y\|$. Therefore,
  \begin{equation}
  \label{eqn:simplific_1}
    \begin{split}
      (ACu-y)^T&(ACu-y) \\
      = &~u^TC^TA^TACu - 2y^TACu +y^Ty\\
      = &~y^T((AC)^\dagger)^TC^TA^TAC(AC)^\dagger y - 2y^TAC(AC)^\dagger y+y^Ty\\ 
      = & ~-y^TAC(AC)^\dagger y +y^Ty\\ 
    \end{split}
  \end{equation}
  
  \begin{equation}
  \label{eqn:simplific_2}
    \begin{split}
      (AR^Tv-y)^T&(AR^Tv-y) \\
      = &v^TRA^TAR^Tv-2y^TAR^Tv+y^Ty \\
      = &y^T(A^\dagger)^TR^TRA^TAR^TRA^\dagger y - 2y^TAR^TRA^\dagger y +y^Ty\\
    \end{split}
  \end{equation}

  \noindent Now, we prove that if the columns of $A$ are assumed to be orthogonal i.e. if $A^TA$ is a diagonal matrix
  then $(ACu-y)^T(ACu-y) = (AR^Tv-y)^T(AR^Tv-y)$. To show this, we first try to find the $C$ which 
  minimizes $(ACu-y)^T(ACu-y)$. As has been shown above, this is equivalent to minimizing
  \[-y^TAC(AC)^\dagger y + y^Ty\] or maximizing
  \begin{equation}
    \label{eqn:onf}
    (A^Ty)^T C(C^TA^TAC)^{-1}C^T (A^Ty)
  \end{equation}
  As $A^TA$ is diagonal, $C(C^TA^TAC)^{-1}C^T = CC^T(A^TA)^{-1}$ and hence,
  \begin{equation}
    \begin{split}
      (A^Ty)^T C(C^TA^TAC)^{-1}C^T (A^Ty) & = \sum_{i \in \{j | (CC^T)_{jj} \neq 0\}} \frac{(A^Ty)_i^2}{\lambda_i}\\ 
    \end{split}
  \end{equation}
  where, $\lambda_i$'s are the diagonal elements of $A^TA$
  (note that the $\lambda_i$'s are strictly positive real numbers).
  The maximum possible value of 
  $\sum_{i \in \{j | (CC^T)_{jj} \neq 0\}} \frac{(A^Ty)_i^2}{\lambda_i}$ is
  $\sum_{i \in \{j | (R^TR)_{jj} \neq 0\}} \frac{(A^Ty)_i^2}{\lambda_i}$ 
  as $R$ picks the maximum $k$-components of $(A^TA)^{-1}(A^Ty)_{\circ}^2$. So without loss
  of generality we can assume that $C=R^T$ as the error $\|ACu-y\|$ cannot
  be minimised any further.

  The $i^{\mathrm{th}}$ row of the matrix $R^TRA^TA$ equals the 
  $i^{\mathrm{th}}$ row of $A^TA$ if $R_{ii} = 1$ and equals the zero row 
  otherwise. Therefore, 
  \[(A^TA)^{-1}R^TR(A^TA)R^TR (A^TA)^{-1} - 2R^TR(A^TA)^{-1} = -R^TR(A^TA)^{-1}\]
  From $C=R^T$ we finally get
  \[(A^TA)^{-1}R^TR(A^TA)R^TR (A^TA)^{-1} +  C(C^T(A^TA)C)^{-1} C^T 
  - 2R^TR(A^TA)^{-1} = 0\]
\end{proof}

The above analysis raises the following question: can we say anything in the reverse?  Supposing we are 
given that $\|Av^k-y\| = \|Au^k-y\|$ for all $y$, then is it true that the columns 
of $A$ are orthogonal? We show below that this is indeed true for the case of $k=1$. For the proof of 
this fact we need the following supporting lemma.

\begin{lemma}
  \label{lem:inverdiag}
  Let $A$ be an $m \times n$ matrix such that $m > n$, $n > 1$. Assume that $A$ has full column rank.
  If the diagonal entries of $(A^TA)^{-1}$ are inverses of the diagonal entries of $A^TA$ (i.e.,
  if $(A^TA)^{-1}_{ii}(A^TA)_{ii} = 1$) then the off-diagonal elements of $A^TA$ (and hence the off-diagonal 
  elements of $(A^TA)^{-1}$) are all equal to zero.
\end{lemma}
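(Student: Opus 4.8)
The plan is to work entirely with the Gram matrix $M := A^TA$. Since $A$ has full column rank, $M$ is symmetric and positive definite, so it has a positive definite square root $M^{1/2}$ and an inverse $M^{-1}$ that is also positive definite. I will show that the hypothesis $M_{ii}(M^{-1})_{ii}=1$ for every $i$ forces each standard basis vector $e_i$ to be an eigenvector of $M$, from which diagonality of $M$ (and hence of $M^{-1}$) is immediate.

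The first step is to establish the general bound $M_{ii}(M^{-1})_{ii}\ge 1$ together with a clean description of when equality holds. Applying the Cauchy--Schwarz inequality to the vectors $M^{1/2}e_i$ and $M^{-1/2}e_i$,
\[
1 = \langle M^{1/2}e_i,\,M^{-1/2}e_i\rangle^2 \le \norm{M^{1/2}e_i}^2\,\norm{M^{-1/2}e_i}^2 = (e_i^TMe_i)\,(e_i^TM^{-1}e_i) = M_{ii}\,(M^{-1})_{ii}.
\]
Both $M^{1/2}e_i$ and $M^{-1/2}e_i$ are nonzero, so equality in Cauchy--Schwarz holds exactly when $M^{1/2}e_i = c\,M^{-1/2}e_i$ for some scalar $c$; multiplying through by $M^{1/2}$ gives $Me_i = c\,e_i$, i.e. $e_i$ is an eigenvector of $M$, necessarily with eigenvalue $c = M_{ii}$.

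The second step is just to run this over all indices: by hypothesis each of the $n$ quantities $M_{ii}(M^{-1})_{ii}$ equals $1$, so the equality case applies for every $i$, giving $Me_i = \lambda_i e_i$ with $\lambda_i := M_{ii}$ for $i=1,\dots,n$. Reading these off as columns shows $M = \mathrm{diag}(\lambda_1,\dots,\lambda_n)$, so every off-diagonal entry of $A^TA$ vanishes; taking inverses, the off-diagonal entries of $(A^TA)^{-1}$ vanish too. An equivalent route, closer to the block-matrix computations used earlier in the paper, is to permute index $i$ to the front and partition $M$ accordingly, then invoke the Schur-complement identity $(M^{-1})_{ii} = \bigl(M_{ii} - b_i^T N_i^{-1} b_i\bigr)^{-1}$, where $b_i$ is column $i$ of $M$ with its diagonal entry deleted and $N_i$ is the principal submatrix of $M$ obtained by deleting row and column $i$; positive definiteness of $N_i$ makes $b_i^T N_i^{-1} b_i \ge 0$, with equality iff $b_i = 0$, which again forces row and column $i$ of $M$ to be supported on the diagonal.

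There is no deep obstacle here; the one point that needs care is the equality analysis rather than the inequality itself — one must correctly translate the Cauchy--Schwarz equality condition into ``$e_i$ is an eigenvector of $M$'' (or, in the Schur-complement version, verify that $N_i$ is genuinely invertible, which is where the principal-submatrix-of-a-positive-definite-matrix fact, and implicitly $n>1$, enter). The rest is routine.
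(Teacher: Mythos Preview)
Your proof is correct and takes a genuinely different route from the paper. The paper argues by induction on $n$: it verifies the $n=2$ case by hand, then for the inductive step partitions $A^TA$ as $\begin{pmatrix}\Sigma_{11}&\Sigma_{12}\\\Sigma_{21}&\Sigma_{22}\end{pmatrix}$ with $\Sigma_{22}$ scalar, uses the Schur-complement formula for the $(n,n)$ entry of the inverse to deduce $\Sigma_{12}^T\Sigma_{11}^{-1}\Sigma_{12}=0$, hence $\Sigma_{12}=0$, and then appeals to the induction hypothesis on $\Sigma_{11}$. Your primary argument instead proves the pointwise inequality $M_{ii}(M^{-1})_{ii}\ge 1$ via Cauchy--Schwarz on $M^{1/2}e_i$ and $M^{-1/2}e_i$, and reads off from the equality condition that each $e_i$ must be an eigenvector of $M$; this is direct, avoids induction entirely, and yields the useful general inequality as a by-product. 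Your alternative Schur-complement remark is essentially the paper's mechanism applied to each index $i$ separately rather than inductively; either way the key fact is that the deleted principal submatrix remains positive definite, which is exactly what the paper invokes (and where the hypothesis $n>1$ is used).
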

\begin{proof}
  First the claim is proved for $n=2$ and the general case is proved by induction. For the case $n=2$ assume
  that $A^TA = \left( \begin{array}{cc} \alpha & \beta \\ \beta & \gamma \end{array} \right)$ and 
  $(A^TA)^{-1} = \left(\begin{array}{cc} \frac{1}{\alpha} & \beta^\prime \\ \beta^\prime & \frac{1}{\gamma} 
  \end{array} \right)$.
  From,
  \[\left( \begin{array}{cc} \alpha & \beta \\ \beta & \gamma \end{array} \right) * 
  \left(\begin{array}{cc} \frac{1}{\alpha} & \beta^\prime \\ \beta^\prime & \frac{1}{\gamma} \end{array} \right) =
  \left( \begin{array}{cc} 1 & 0 \\ 0 & 1 \end{array} \right)\]
  we get that $\beta\beta^\prime = 0$ and $\alpha\beta^\prime + \frac{\beta}{\gamma} = 0$, which implies that 
  $\beta = \beta^\prime = 0$ proving the claim for this case.
  Now, we assume that the proposition is true for $n-1$. Let
  \[A^TA= \left( \begin{array}{cc} \Sigma_{11} & \Sigma_{12} \\ \Sigma_{21} & \Sigma{22} \end{array} \right)\]
  where $\Sigma_{11}$ is an $n-1\times n-1$ matrix, $\Sigma_{22}$ is a scalar, $\Sigma_{12}$ is a $n-1\times 1$ vector
  and $\Sigma_{21}$ is a $1\times n-1$ vector. It is important to note here that $A^TA$ and $\Sigma_{11}$ are symmetric
  positive definite matrices. Therefore, their inverses $(A^TA)^{-1}$ and $\Sigma_{11}^{-1}$ are 
  also symmetric and positive definite (\cite{matmathper}, Corollary 14.2.11). Using blockwise matrix 
  inversion, we can write $(A^TA)^{-1}$ as (see \cite{matmath}, p. 45)
  \[\left( \begin{array}{cc} \Sigma_{11}^{-1} 
    + \Sigma_{11}^{-1}\Sigma_{12}(\Sigma_{22} -\Sigma_{21}\Sigma_{11}^{-1}\Sigma_{12})^{-1}\Sigma_{21}\Sigma_{11}^{-1} 
    & -\Sigma_{11}^{-1}\Sigma_{12}(\Sigma_{22} -\Sigma_{21}\Sigma_{11}^{-1}\Sigma_{12})^{-1}\\
    -(\Sigma_{22} -\Sigma_{21}\Sigma_{11}^{-1}\Sigma_{12})^{-1}\Sigma_{12}\Sigma_{11}^{-1}
    & (\Sigma_{22} -\Sigma_{21}\Sigma_{11}^{-1}\Sigma_{12})^{-1} \end{array} 
    \right)\] 
    Here, $(\Sigma_{22} -\Sigma_{21}\Sigma_{11}^{-1}\Sigma_{12})$ is 
    the schur complement of $\Sigma_{11}$ in $A^TA$ and is positive definite as 
    $A^TA$ and $\Sigma_{11}$ are both positive definite (see, \cite{convopt}, Appendix A.5.5). Hence, the 
    above blockwise matrix inversion formula is valid.
    Now, from
    \[(\Sigma_{22} -\Sigma_{21}\Sigma_{11}^{-1}\Sigma_{12})^{-1}*\Sigma_{22} = 1\]
    we get that $\Sigma_{21}\Sigma_{11}^{-1}\Sigma_{12} =\Sigma_{12}^T\Sigma_{11}^{-1}\Sigma_{12}=0$. 
    As $\Sigma_{11}^{-1}$ is 
    positive definite, we obtain that $\Sigma_{12} = 0$. Therefore,
    \[(A^TA)^{-1}=\left( \begin{array}{cc} \Sigma_{11}^{-1}  
    & 0\\
    0
    & \Sigma_{22}^{-1} \end{array} 
    \right)\] 
    From the induction hypothesis we know that $\Sigma_{11}^{-1}$ is a diagonal matrix. Therefore, 
    $(A^TA)^{-1}$ is diagonal and the induction step is proved.
\end{proof}

\begin{proposition}
  \label{prop:prop2} 
  Let $A$ be an $m \times n$ matrix such that $m > n$, $n > 1$. Assume that $A$ has full column rank. Then, 
  \[\|Au^1-y\| = \|Av^1-y\|~\forall y \in \mathbb{R}^m\] if and only if the columns of $A$ 
  are orthogonal.
\end{proposition}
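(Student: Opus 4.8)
The ``if'' direction is immediate: when the columns of $A$ are orthogonal, Proposition~\ref{prop:prop1} applied with $k=1$ already gives $\|Au^1-y\|=\|Av^1-y\|$ for all $y$. So the entire content is the ``only if'' direction, and the plan is to assume $\|Au^1-y\|=\|Av^1-y\|$ for every $y\in\mathbb{R}^m$ and to deduce that $G:=A^TA$ is diagonal, which is exactly the statement that the columns of $A$ are orthogonal.

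First I would choose convenient test data. Since $A$ has full column rank, $A^T:\mathbb{R}^m\to\mathbb{R}^n$ is onto, so for each index $r\in\{1,\dots,n\}$ I can fix a vector $y_r\in\mathbb{R}^m$ with $A^Ty_r=e_r$, the $r$-th standard basis vector of $\mathbb{R}^n$. Expanding $\|Ax-y\|^2=x^TGx-2(A^Ty)^Tx+\|y\|^2$ shows that, for $y=y_r$, solving $P_1$ with $k=1$ amounts to choosing an index $i$ and a scalar $\mu$ minimising $\mu^2G_{ii}-2\mu(e_r)_i+\|y_r\|^2$; optimising $\mu$ leaves $\|y_r\|^2-(e_r)_i^2/G_{ii}$, and optimising $i$ then gives $\|Au^1-y_r\|^2=\|y_r\|^2-1/G_{rr}$. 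On the other side, $x_{\mathrm{pi}}=A^\dagger y_r=G^{-1}e_r$ is the $r$-th column of $G^{-1}$, and since $(e_r)_{\circ}^2=e_r$ the vector $z=G^{-1}(A^Ty_r)_{\circ}^2$ is again the $r$-th column of $G^{-1}$; hence $v^1$ is supported on the single index $j$ at which that column is largest, with nonzero value $(G^{-1})_{jr}$, so $\|Av^1-y_r\|^2=(G^{-1})_{jr}^2G_{jj}-2(e_r)_j(G^{-1})_{jr}+\|y_r\|^2$.

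Then I would run a short case split on $j$. If $j\neq r$ then $(e_r)_j=0$, so $\|Av^1-y_r\|^2=(G^{-1})_{jr}^2G_{jj}+\|y_r\|^2\ge\|y_r\|^2$, which cannot equal $\|y_r\|^2-1/G_{rr}$ since $G_{rr}>0$; therefore $j=r$. Substituting $j=r$ into the two error formulas and equating them gives $(G^{-1})_{rr}^2G_{rr}-2(G^{-1})_{rr}=-1/G_{rr}$, i.e.\ $\big((G^{-1})_{rr}G_{rr}-1\big)^2=0$, so $(G^{-1})_{rr}G_{rr}=1$. As $r$ was arbitrary, the diagonal entries of $(A^TA)^{-1}$ are the reciprocals of the diagonal entries of $A^TA$, and Lemma~\ref{lem:inverdiag} then forces all off-diagonal entries of $A^TA$ to vanish, finishing the proof.

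The step I expect to require the most care is the handling of $v^1$ when the maximum of $z$ is attained at several indices, since the construction of $v^1$ resolves such ties through a stable sort and the argument must be valid for whichever index the rule picks. This turns out not to be an obstacle: the branch $j\neq r$ is killed purely by the inequality $\|Av^1-y_r\|^2\ge\|y_r\|^2$, regardless of how ties are broken, so the tie-breaking convention is never used. The only other points needing attention are that $G_{rr}>0$ and that $(A^TA)^{-1}$ exists, both of which are consequences of the full-column-rank hypothesis.
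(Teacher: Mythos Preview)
Your argument is correct and follows essentially the same route as the paper: you pick test vectors $y_r$ with $A^Ty_r=e_r$ (the paper takes the specific choice $y_r=A(A^TA)^{-1}e_r$, but only $A^Ty_r$ enters the computation), rule out the case where $v^1$ is supported on an index $j\neq r$ because that would make $\|Av^1-y_r\|^2\ge\|y_r\|^2>\|Au^1-y_r\|^2$, and then the case $j=r$ forces $(G^{-1})_{rr}G_{rr}=1$, whence Lemma~\ref{lem:inverdiag} finishes. Your treatment of ties is a welcome clarification that the paper leaves implicit.
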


\begin{proof}
  The forward implication has already been proved previously and 
  to prove the reverse implication, we show the existence of a few $y$'s so that if 
  expression if $\|Au^1-y\| = \|Av^1-y\|$ for all these choices of $y$ then $A^TA$ is diagonal.
  We first choose $y = A(A^TA)^{-1}[1,0,0,\cdots,0]^T$ and look for 
  solutions $u^1$ and $v^1$. 

  For the above choice of $y$ expression
  (\ref{eqn:onf}) reduces to:-
  \[[1,0,0,\cdots,0] C(C^TA^TAC)^{-1}C^T [1,0,0,\cdots,0]^T\]
  It is easy to see that the $C$ which 
  maximizes the above expression is $C = [1,0,0,\cdots,0]^T$ as for any other choice of $C$ 
  the expression equals zero. 

  Now, we shall show that  
  \begin{equation}
    \label{eqn:ong}
    \begin{split}
      (A^Ty)^T((A^TA)^{-1}R^TR(A^TA)R^TR (A^TA)^{-1} - 2R^TR(A^TA)^{-1} \\
      +  C(C^T(A^TA)C)^{-1} C^T)(A^Ty)
      \end{split}
  \end{equation}
  can equal zero only for the choice $R=[1,0,0,\cdots,0]$. For this, let
  \[A^TA = \left( \begin{array}{cccc} 
    p_{11} & p_{12} & \cdots & p_{1n} \\
    p_{21} & p_{22} & \cdots & p_{2n} \\
    \vdots & \vdots & \cdots & \vdots \\
    p_{n1} & p_{n2} & \cdots & p_{nn} \\
  \end{array} \right), (A^TA)^{-1} = \left( \begin{array}{cccc} 
    j_{11} & j_{12} & \cdots & j_{1n} \\
    j_{21} & j_{22} & \cdots & j_{2n} \\
    \vdots & \vdots & \cdots & \vdots \\
    j_{n1} & j_{n2} & \cdots & j_{nn} \\
  \end{array} \right) \]
  When $R = [1,0,0,\cdots,0]$ we get that the value of expression (\ref{eqn:ong}) is equal to 
  $p_{11}j_{11}^2 - 2j_{11} + \frac{1}{p_{11}}$. This value can be made zero by choosing 
  $j_{11} = \frac{1}{p_{11}}$. For any
  other choice of $R$ such that $R_i = 1, i \neq 1$, the value of expression (\ref{eqn:ong}) is equal to 
  $p_{ii}j_{1i}^2 + \frac{1}{p_{11}}$. This value cannot be made zero by any choice of $j_{1i}$ as 
  $p_{11} > 0$.
  As expression (\ref{eqn:ong}) has to equal zero from our initial assumption, we are forced to choose
  $j_{11} = \frac{1}{p_{11}}$.

  Now, we choose $y = A(A^TA)^{-1}[0,1,0,\cdots,0]^T$ and obtain that $j_{22} = \frac{1}{p_{22}}$. 
  By continuing in this fashion, we get that $j_{ii} = \frac{1}{p_{ii}}~\forall i \in \{1,2,3,\cdots,n\}$. 
  Finally, we apply 
  Lemma \ref{lem:inverdiag} and get that both $A^TA$ and $(A^TA)^{-1}$ are diagonal.
\end{proof}
\bibliographystyle{harvard}
\bibliography{sparse_od_ortho_jampana_challa}

\end{document}